\pgfplotsset{compat=1.18}
\newcolumntype{C}{>{\centering\arraybackslash}X}
\newcounter{subeq}
\newcommand{\Q}{\mathbb{Q}}
\newcommand{\be}{\begin{equation}}
	\newcommand{\ee}{\end{equation}}
\newcommand{\bea}{\begin{eqnarray}}
	\newcommand{\eea}{\end{eqnarray}}
\newcommand{\bes}{\begin{equation*}}
	\newcommand{\ees}{\end{equation*}}
\newcommand{\beas}{\begin{eqnarray*}}
	\newcommand{\eeas}{\end{eqnarray*}}
\newcommand{\proj}[1]{\ket{#1}\bra{#1}}
\renewcommand{\H}{\mathcal{H}}
\def\tr{\mathrm{tr}}
\def\Q{\mathcal{Q}}
\newcommand{\vecP}{\vec{P}}
\newtheorem{thm}{Theorem}
\newtheorem*{thm*}{Theorem}
\newtheorem*{lem*}{Lemma}
\newtheorem*{prop*}{Proposition}
\newtheorem*{lipschitzLem*}{Lemma \ref{lipschitz}}
\newtheorem*{lipschitzCubeLem*}{Lemma \ref{lipschitzCube}}
\newtheorem*{pgmNearlyOptimalThm*}{Theorem \ref{pgmNearlyOptimal}}
\begin{document}
\title{Hardy-type self-testing and exposedness of tripartite GHZ correlations}

\author{Smritikana Patra}
\thanks{These authors contributed equally to this work.}
\email{asksmritikanapatra@gmail.com, palsoumyajit2210@gmail.com}
\affiliation{Department of Physics,
Indian Institute of Technology Jodhpur, Rajasthan 342037, India.}

\author{Soumyajit Pal}
\thanks{These authors contributed equally to this work.}
\email{asksmritikanapatra@gmail.com, palsoumyajit2210@gmail.com}
\affiliation{Physics and Applied Mathematics Unit, Indian Statistical Institute,
203 B. T. Road, Kolkata 700108, India.}

\author{Ranendu Adhikary}
 \email{ronjumath@gmail.com}
\affiliation{Department of Physics and Center for Quantum Frontiers of Research and Technology (QFort),
National Cheng Kung University, Tainan 701, Taiwan}

\begin{abstract}
{Nonlocality can be witnessed either through Bell-inequality violations or through logical contradictions such as Hardy's paradox. In the bipartite two input two outcome scenario, these two routes have distinct geometric behavior: CHSH-maximal correlations are exposed points of the quantum set, whereas known Hardy-type self-testing correlations on the no-signaling boundary are non-exposed. Here we show that this bipartite intuition fails in the tripartite two input two outcome scenario. We study the tripartite instance of a multipartite Hardy-type paradox and prove that the correlation attaining the maximal Hardy success probability self-tests the Greenberger--Horne--Zeilinger state and the associated measurements. Although this correlation lies on the no-signaling boundary, we show that it is an extremal and exposed point of the quantum correlation set. Moreover, it coincides with the correlation attaining the maximal violation of the Mermin inequality. Thus, in the tripartite GHZ scenario, the logical-paradox and Bell-inequality routes to nonlocality select the same exposed quantum boundary point. We also establish a robust version of the self-test, showing that small deviations from the ideal Hardy constraints imply quantitative closeness to the target state and measurements. Our results reveal a qualitative geometric difference between bipartite and tripartite Hardy-type nonlocality and suggest a broader investigation of exposedness for multipartite Hardy correlations in the multiparty setting.}
\end{abstract}
\maketitle

\section{Introduction}

Nonlocality can be witnessed in different, but closely related, ways. The most familiar route is through the violation of a Bell inequality, where a linear functional of the observed conditional probabilities is bounded for all local hidden-variable models but can be exceeded by quantum correlations \cite{brunner2014bell,Pironio2014allchsh}. A second route is provided by logical contradictions of the Hardy type, where a suitable pattern of zero and nonzero probabilities rules out any local-realistic description without explicitly invoking a Bell inequality \cite{hardy1993nonlocality,Goldstein1994hardy}. These two manifestations already appear in the bipartite $((2,2,2))$ Bell scenario, where two parties each choose between two binary-outcome measurements. In this setting, the CHSH inequality is the paradigmatic inequality-based witness of nonlocality \cite{clauser1969proposed,CHSH1970}, while Hardy's paradox gives a canonical probability-based contradiction with local realism \cite{hardy1993nonlocality,Goldstein1994hardy,scarani2019bell}.

The distinction between these two forms of nonlocality becomes particularly transparent from a convex-geometric viewpoint. In a Bell scenario, the set of local correlations forms a convex polytope whose vertices correspond to deterministic local strategies and whose facets are described by tight Bell inequalities \cite{Pitowsky1991,Peres1999allbellineq,bernards2020generalizing}. Quantum theory gives rise to a larger convex set, which contains nonlocal correlations lying outside the local polytope but still inside the no-signaling set \cite{brunner2014bell,Pironio2014allchsh}. In the $((2,2,2))$ scenario, quantum mechanics violates the CHSH inequality up to the Tsirelson bound \cite{Cirelson1980,Filipp2004generalizingtsirelson}. Hardy's paradox, on the other hand, detects nonlocality through a logical structure: if three specific joint probabilities vanish, then the presence of one strictly positive probability is incompatible with locality \cite{hardy1993nonlocality,Goldstein1994hardy,scarani2019bell}. The relation between these two forms of nonlocality has been studied from several perspectives \cite{Braun2008hardyvschsh,Mancinska2014unifiedhardychsh,goh2018geometry,Seiler2022hardychsh,Le2023quantumcorrelations,Chen2023quantumcorrelations,Barizien2025}.

A useful geometric distinction is that between exposed and non-exposed points of the quantum correlation set. An exposed point is the unique maximizer of some linear functional, or equivalently a point singled out by a supporting hyperplane. An extremal point that is not uniquely selected in this way is non-exposed \cite{Boyd2009book}. Bell-inequality-based self-tests, such as the self-test obtained from maximal CHSH violation, naturally lead to exposed quantum boundary points \cite{Thinh2019geometry,Barizien2024tsirelson}. Hardy-type correlations behave differently in the bipartite case: the quantum correlation attaining the maximal Hardy success probability is known to be an extremal but non-exposed point of the quantum set \cite{goh2018geometry,Chen2023quantumcorrelations}. More generally, several bipartite Hardy-type self-testing correlations found on the no-signaling boundary are provably non-exposed \cite{Chen2023quantumcorrelations}, and further criteria identify broad families of bipartite quantum realizations whose corresponding points are non-exposed \cite{Barizien2025}. Thus, in the $((2,2,2))$ scenario, inequality-based and paradox-based nonlocality lead to sharply different geometric behavior.

This contrast raises a natural question: is non-exposedness a generic geometric feature of Hardy-type logical nonlocality, or is it specific to the bipartite setting? Multipartite Bell scenarios provide a natural arena in which to address this question. The $((3,2,2))$ case is especially important because it contains two canonical routes to GHZ-type nonlocality. On the inequality side, the Mermin inequality provides a fundamental witness of tripartite nonlocality \cite{mermin1990extreme}. On the logical-paradox side, multipartite Hardy-type arguments extend Hardy's original contradiction to $N$ parties \cite{Wu1996hardy,Ghosh1998hardy,choudhary2011hardygeneralized,Das2013tripartitehardy,cereceda2004hardy}. Their respective generalizations, including the Mermin-Ardehali-Belinskii-Klyshko inequalities \cite{Mermin1990,Ardehali1992,belinskiui1993interference} and multipartite Hardy-type paradoxes \cite{cereceda2004hardy}, are closely connected to Greenberger--Horne--Zeilinger (GHZ) correlations \cite{Chen2009tightmultipartitebell}. The tripartite GHZ scenario is therefore an ideal setting in which to compare the geometric consequences of inequality-based and paradox-based nonlocality.

This geometric perspective is further enriched by self-testing, a device-independent certification method that identifies the underlying quantum state and measurements solely from the observed correlations \cite{MY2004,kaniewski2016analytic,kaniewski2017self,vsupic2020self,christandl2022operational,baptista2025mathematical}. Self-testing is closely related to the boundary structure of the quantum correlation set: self-tested correlations are typically extremal, and in many Bell-inequality-based examples they are exposed by the Bell functional whose maximal quantum value they attain \cite{vsupic2020self,goh2018geometry}. For example, maximal CHSH violation self-tests the two-qubit maximally entangled state and appropriate local measurements \cite{Wang2016,valcarce2022self}, while maximal Mermin violation self-tests the tripartite GHZ state \cite{chen2004maximalmermin,kaniewski2016analytic,kaniewski2017self,panwar2023elegant,bao2025Graphtheoretic}. More generally, GHZ states admit self-testing from a variety of inequivalent Bell inequalities \cite{Supic2018,sarkar2022self,Singh2025ghz}, reflecting the rich structure of multipartite quantum correlations.

In this work, we focus on the tripartite instance of a multipartite Hardy-type paradox \cite{cereceda2004hardy} in the $((3,2,2))$ Bell scenario. We show that the quantum correlation attaining the maximal Hardy success probability self-tests the tripartite GHZ state and the associated local measurements. This provides a Hardy-type, rather than Bell-inequality-based, certification of the GHZ realization. More importantly, the resulting correlation displays a geometric behavior that is qualitatively different from the known bipartite Hardy-type examples. Although the ideal Hardy behavior lies on the boundary of the no-signaling set by saturating several positivity constraints, we prove that the tripartite Hardy-maximal correlation is not only extremal in the quantum set but also exposed. This contrasts with the bipartite case, where Hardy-type self-testing correlations on the no-signaling boundary are known to be non-exposed \cite{Chen2023quantumcorrelations}.

A further consequence of our analysis is that this same Hardy-maximal correlation coincides with the correlation attaining the maximal quantum violation of the Mermin inequality. A priori, the two criteria select correlations in very different ways: the Mermin inequality optimizes a linear Bell functional, whereas the Hardy construction imposes a logical pattern of vanishing and nonvanishing probabilities. Our result shows that, in the tripartite GHZ scenario, these two routes converge to the same quantum boundary point. Thus, the maximal tripartite Hardy correlation is simultaneously a Hardy-maximal behavior, a Mermin-maximal behavior, a self-testing correlation for the GHZ state, and an exposed extremal point of the quantum correlation set.

Our results reveal a qualitative geometric transition from the bipartite to the tripartite Hardy setting. In the $((2,2,2))$ scenario, known Hardy-type self-testing correlations on the no-signaling boundary provide prominent examples of non-exposed quantum points. In the $((3,2,2))$ scenario, however, the maximal Hardy-type correlation also lies on the no-signaling boundary but is nevertheless exposed. Moreover, it coincides with the Mermin-maximal correlation. This Hardy--Mermin coincidence suggests that multipartite Hardy-type nonlocality has a richer geometric structure than its bipartite counterpart and motivates the investigation of exposedness for maximal Hardy correlations in the general $((N,2,2))$ scenario.

We also provide a robust self-testing analysis for the tripartite Hardy construction. Robustness is essential for experimental and device-independent applications, where observed correlations cannot satisfy ideal probability constraints exactly. We show that small deviations from the ideal Hardy constraints, together with a Hardy success probability close to its optimal quantum value, imply quantitative closeness of the underlying state and measurements to the ideal GHZ realization. This places the Hardy-type construction on an operational footing comparable to robust self-tests based on Bell inequalities \cite{kaniewski2016analytic,kaniewski2017self,Supic2018,sarkar2022self,Singh2025ghz}. Beyond its foundational significance, self-testing plays a central role in device-independent quantum information protocols, including quantum key distribution, randomness certification, network certification, and delegated quantum computation \cite{Metger2021selftest,Sekatski2021deviceindependent,RoyDeloison2025DIQKD,Liu2024almostquantumbell,adhikary2025,Renou2018network,Supic2020certification,Bancal2021selftestingfinite,Murta2023network,Sarkar2025network,McKague2016BQP,Storz2025superconductingqubits,Adamson2025remotestateprep}.

The remainder of this paper is organized as follows. Section~\ref{sec1} introduces multipartite nonlocal correlations and the Hardy-type paradox used throughout the work. Section~\ref{sec2} develops the self-testing framework for the GHZ state based on this paradox. In Sec.~\ref{sec3}, we analyze the geometry of the quantum correlation attaining the maximal tripartite Hardy success probability, proving that it is an exposed extremal point of the quantum set and identifying it with the Mermin-maximal correlation. Section~\ref{sec4} presents the robust self-testing analysis in the tripartite scenario. Finally, Sec.~\ref{sec5} summarizes our results and discusses possible extensions to higher-party $((N,2,2))$ scenarios.

\section{Hardy's nonlocality argument}\label{sec1}

Consider $N$ spatially separated observers $A_{1}$, $A_{2}$, $\dots$, and $A_{N}$ engaged in rounds of measurements on $N$-partite physical systems. In each round, every party $A_{i}$ chooses a measurement setting $x_{i}$ and obtains an outcome $a_{i}$. The resulting joint conditional probability distribution is denoted as $P(a_1 a_2\dots a_N | x_1 x_2\dots x_N)$. A behavior $P(a_1 a_2\dots a_N | x_1 x_2\dots x_N)$ is said to admit a \emph{local} model if there exists a shared classical hidden variable $\lambda$ such that
\begin{equation}\label{eq:local_equation}
P_{\mathcal{L}}(a_1 a_2\dots a_N | x_1 x_2\dots x_N)
    = \sum_{\lambda} \nu(\lambda)\prod_{i=1}^{N}P_{A_{i}}(a_{i}| x_{i},\lambda),
\end{equation}
where $\nu(\lambda)\ge 0$ and $\sum_{\lambda}\nu(\lambda)=1$. A distribution that does not satisfy the decomposition~\eqref{eq:local_equation} is nonlocal~\cite{bancal2013definitions}.

We first recall Hardy's nonlocality argument in the bipartite scenario. Two observers $A_{1}$ and $A_{2}$ each choose between two dichotomic observables $A^{0}_{i}$ and $A^{1}_{i}$ ($i=1,2$), with outcomes in $\{+1,-1\}$. Let $P(a_{1},a_{2}\mid x_{1},x_{2})$ be the corresponding joint probability distribution. Hardy showed~\cite{hardy1992quantum} that if the four constraints,
\begin{equation}\label{hardy0}
\begin{split}
P(+1,+1|A^0_1,A^0_2)&=p>0,\\
P(+1,+1|A^1_1,A^0_2)&=0,\\
P(+1,+1|A^0_1,A^1_2)&=0,\\
P(-1,-1|A^1_1,A^1_2)&=0.
\end{split}
\end{equation}
are satisfied, then the behavior is necessarily nonlocal, i.e., it does not admit the form~\eqref{eq:local_equation}. The parameter $p=P(+1,+1| A^{0}_{1},A^{0}_{2})$ quantifies the strength of the paradox. Quantum mechanically, its maximal value is $\frac{-11 + 5\sqrt{5}}{2}$ $(\approx 0.09)$ \cite{RZS12} and is achieved with projective measurements on a pure two-qubit non maximally entangled state.

We now extend Hardy's construction to the multipartite setting. Consider $N$ observers sharing an $N$-partite system, where each observer $A_{i}$ chooses between two dichotomic observables $A^{0}_{i}$ and $A^{1}_{i}$ with outcomes $\{+1,-1\}$. Among the various possible generalizations \cite{Rahaman2014nonlocal,chen2014test}, we adopt the family introduced in Ref.~\cite{cereceda2004hardy}. The $N$-party Hardy-type conditions are

    \begin{eqnarray}\label{HardyN}
        P(+1,+1,\dots,+1|A^0_1,A^0_2,A^0_3,\dots,A^0_N)=p^H_N, \label{eq:HN} \\
            \begin{aligned} \label{multi}
			P(+1,+1,\dots,+1|A^1_1,A^0_2,A^0_3,\dots,A^0_N)&=0,\\
			P(+1,+1,\dots,+1|A^0_1,A^1_2,A^0_3,\dots,A^0_N)&=0,\\
                        \vdots&\\
                P(+1,+1,\dots,+1|A^0_1,A^0_2,A^0_3,\dots,A^1_N)&=0,\\
			P(-1,-1,\dots,-1|A^1_1,A^1_2,A^1_3,\dots,A^1_N)&=0.
		\end{aligned}  
            \end{eqnarray}

A behavior satisfying~\eqref{multi} is nonlocal if and only if $p^{H}_{N}>0$. A sketch of the proof is as follows. Suppose a behavior satisfies~\eqref{eq:local_equation} and the Hardy constraints with $p^{H}_{N}>0$. Then, $p^{H}_{N}=\sum_{\lambda}\nu(\lambda)\prod_{i=1}^{N}P_{A_{i}}(+1| A^{0}_{i},\lambda)>0$ implies the existence of at least one $\lambda_{0}$ such that $P_{A_{i}}(+1| A^{0}_{i},\lambda_{0})\neq 0$ for all $i$. The vanishing constraints in~\eqref{multi} then require $P_{A_{i}}(+1| A^{1}_{i},\lambda_{0})=0$ for each $i$, and hence $P_{A_{i}}(-1| A^{1}_{i},\lambda_{0})\neq 0$ for all $i$. This assigns a strictly positive probability to $P(-1,-1,\dots,-1|A^1_1,A^1_2,A^1_3,\dots,A^1_N)$, contradicting the last condition in~\eqref{multi}. Thus $p^{H}_{N}>0$ implies nonlocality.

In this paper, we focus on the tripartite instance $N=3$, namely the $((3,2,2))$ Bell scenario. We study the quantum behavior that maximizes the Hardy success probability $p_3^H$, and show that this behavior self-tests the GHZ state, and nevertheless defines an exposed extremal point of the quantum correlation set.

\section{Self-testing of GHZ state}\label{sec2}
We begin by recalling the notion of self-testing for quantum states and measurements, following the formulation in~\cite{vsupic2020self}. A quantum correlation $\vecP \in \Q$ is said to \emph{self-test} a reference quantum realization $\{\ket{\widetilde{\psi}}, \{\widetilde{M_{a_1|x_1}}\}, \{\widetilde{M_{a_2|x_2}}\}, \{\widetilde{M_{a_3|x_3}}\}\}$ if the observed correlation uniquely determines this realization up to local isometries.  More precisely, whenever $\vecP$ is produced by some state $\ket{\psi}_{A_1A_2A_3}$ and local measurements $\{M_{a_1|x_1}\}$, $\{M_{a_2|x_2}\}$, $\{M_{a_3|x_3}\}$ there must exist local isometries $\Phi_{A_i}: \H_{A_i}\mapsto\H_{A'_i}\otimes \H_{A''_i}$ such that
\begin{equation}\label{def:Selftest}
    \Phi (M_{a_1|x_1}\otimes M_{a_2|x_2}\otimes M_{a_3|x_3} \ket{\psi}_{A_1A_2 A_3}) =\ket{\varsigma}_{A'_1A'_2 A'_3}\otimes (\widetilde{M_{a_1|x_1}}\otimes\widetilde{M_{a_2|x_2}}\otimes\widetilde{M_{a_3|x_3}}\ket{\widetilde{\psi}}_{A''_1A''_2 A''_3}),
\end{equation}
where $\Phi:=\Phi_{A_1} \otimes \Phi_{A_2}\otimes \Phi_{A_3}$ and $\ket{\varsigma}_{A'_1A'_2A'_3}$ is some junk state. 

We now recall two characterizations of the extreme points of the quantum set $\Q$, relevant in this scenario. The first, due to~\cite{Mas06}, states that in an $N$-partite Bell experiment with two dichotomic measurements per party, every extreme point of $\Q$ can be realized using projective measurements on an $N$-qubit pure state. The second, due to~\cite{Barizien2025}, refines this further: all extreme points of $\Q$ in Bell scenario can be achieved using a \emph{real} $N$-qubit pure state together with \emph{real} unitary measurements.

With these facts in mind, it is natural to investigate whether seemingly different qubit strategies that reproduce a given correlation $\vecP$ are in fact physically equivalent. 


First, we show that the maximum achievable value of $p^H_3$ over a tripartite qubit system, $(\mathbb{C}^2)^{\otimes 3}$, is exactly $\frac{1}{8}$. Furthermore, as detailed in Appendix~\ref{appenA}, there is a unique qubit strategy (up to local unitaries) that reproduces this optimal correlation. Specifically, it requires the system to be in the GHZ state, with the observables for each party given by
\begin{equation}
\begin{aligned}
    A^0_i &= \begin{pmatrix}
    0 & e^{-\iota \frac{\pi}{6}} \\ 
    e^{\iota \frac{\pi}{6}} & 0
    \end{pmatrix} 
    & \text{and} \quad 
    A^1_i &= \begin{pmatrix}
    0 & e^{-\iota \frac{2\pi}{3}} \\ 
    e^{\iota \frac{2\pi}{3}} & 0
    \end{pmatrix}
\end{aligned}
\end{equation}

Subsequently, we demonstrate that the maximum achievable value of $p^H_3$ over a system of arbitrary local dimension, $(\mathbb{C}^d)^{\otimes 3}$, is identical to that of the qubit case. Consequently, we will see that the quantum state GHZ can be self-tested.

\begin{thm}\label{optimal3} 
 The maximum achievable value of $p^H_3$ among all three-qubit states represents the optimal value attainable within tri-partite quantum states of any finite dimension.
\end{thm}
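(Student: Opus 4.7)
The plan is to reduce the optimization over arbitrary finite-dimensional tripartite quantum states to an optimization over three-qubit states by appealing to the dimension-reduction theorem of Masanes~\cite{Mas06} recalled above. Concretely, I view the Hardy condition as the maximization of the single linear functional $P(+1,+1,+1|A^0_1,A^0_2,A^0_3)$ on the quantum correlation set $\Q$ in the $(3,2,2)$ Bell scenario, subject to the four affine equality constraints in~\eqref{multi}. The qubit-level optimum was already worked out explicitly above as $1/8$, so it suffices to show that no higher value of $p^{H}_{3}$ is attainable with larger local dimension.

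The central step is to argue that the optimum is realized at an extreme point of $\Q$ itself, not merely at an extreme point of the constrained slice $\Q\cap\{\text{zero conditions}\}$, so that the Masanes characterization can be applied directly. I would proceed by convex decomposition. Let $\vecP^{\ast}$ be an optimizer in some dimension $d$ and write $\vecP^{\ast} = \sum_{i}\lambda_{i}\vecP_{i}$ as a finite convex combination of extreme points of $\Q$, which is valid by Carath\'eodory since $\Q$ lives in a finite-dimensional ambient probability space. The vanishing of each of the four joint probabilities in~\eqref{multi} at $\vecP^{\ast}$, together with non-negativity of the corresponding components of every $\vecP_{i}$, forces each $\vecP_{i}$ with $\lambda_{i}>0$ to also satisfy the Hardy zero conditions. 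The objective value at $\vecP^{\ast}$ is then the same convex average of the objective values at the $\vecP_{i}$'s, so at least one extreme point of $\Q$ already attains the optimum. Applying Masanes' theorem to this extreme point produces a three-qubit projective realization achieving the same value of $p^{H}_{3}$, matching the qubit bound derived above.

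The main obstacle I anticipate is technical rather than conceptual: $\Q$ is generally not closed, so strictly one should work with $\overline{\Q}$ and verify via standard compactness arguments that the supremum is attained and that Carath\'eodory decomposition into extreme points remains valid in the closure. This does not affect the final value because the qubit optimum is explicit and finite-dimensional. The conceptually delicate step is the propagation of zero probabilities through the convex decomposition, since it is precisely this that upgrades an optimizer of the constrained slice to an extreme point of $\Q$, the form to which Masanes' dimension reduction applies; the remainder of the argument is essentially bookkeeping.
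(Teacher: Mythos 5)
Your proposal is correct and follows essentially the same route as the paper: both rest on the dimension-reduction lemma of Ref.~\cite{Mas06} and on the observation that the Hardy zero conditions, being non-negative quantities averaged in a convex combination, are inherited by every component carrying positive weight, so the optimum is already attained on a qubit-realizable piece. The only cosmetic difference is that you obtain the convex decomposition abstractly (Carath\'eodory over extreme points of $\Q$, each qubit-realizable), whereas the paper builds it explicitly from the simultaneous block-diagonalization of each party's projectors; the closure caveat you raise is harmless here, since that same lemma makes $\Q$ the convex hull of a compact family of qubit correlations.
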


\begin{proof}
Here we only present the outline of the proof as the details are quite similar to the proof given in \cite{RZS12,rai2021device,adhikary2024self,adhikary2024self1,Chen2023quantumcorrelations}.

Consider a general tripartite state $\rho$ shared among $A_1$, $A_2$, and $A_3$. The operator $\Pi_{a_i|x_i}$ represents the outcome $a_i$ obtained by $A_i$ when it measures the observable $x_i$. Thus, 
\begin{equation}
P(a_1,a_2,a_3|x_1,x_2,x_3) = \operatorname{Tr}[\rho (\Pi_{a_1|x_1} \otimes \Pi_{a_2|x_2} \otimes \Pi_{a_3|x_3})].    
\end{equation}

Considering no constraints on dimensionality, we employ Neumark's dilation theorem \cite{Neumark}, restricting our focus to projective measurements. Thus, the observables associated with $A_1$, $A_2$, and $A_3$ consist of Hermitian operators having eigenvalues of $\pm 1$, depicted as:
\begin{equation*}
x_i = (+1) \Pi_{+|x_i} + (-1) \Pi_{-|x_i}, \quad x_i \in \{A^0_i, A^1_i\}
\end{equation*}
with $\Pi_{+|x_i} + \Pi_{-|x_i}=\mathbb{I}$.

Now using the lemma stated in \cite{Mas06} there exists an orthonormal basis in which party $A_i$'s all four projectors $\Pi_{+|A^0_i}$, $\Pi_{-|A^0_i}$, $\Pi_{+|A^1_i}$ and $\Pi_{-|A^1_i}$ become simultaneously block diagonal, with each block of size either $1 \times 1$ or $2 \times 2$. Hence inducing a direct sum decomposition of the Hilbert space $\mathcal{H}_{A_i} = \bigoplus_k \mathcal{H}_{A_i}^k$ with every subspace $\mathcal{H}_{A_i}^k$ has dimension at most two. Each projector decomposes as $\Pi_{\pm|A^{0/1}_i}=\bigoplus_k\Pi_{\pm|A^{0/1}_i}^k$with $\Pi_{\pm|A^{0/1}_i}^k$ acting only within $\mathcal{H}_{A_i}^k$. The projector onto $\mathcal{H}_{A_i}^k$ is given by $\Pi^k=\Pi_{+|A^{0}_i}^k+\Pi_{-|A^{0}_i}^k=\Pi_{+|A^{1}_i}^k+\Pi_{-|A^{1}_i}^k$.

Hence, we get the following expression of the joint distribution:
\begin{equation}
P(a_1,a_2,a_3|x_1,x_2,x_3) = \sum_{l,m,n} \lambda_{lmn} \operatorname{Tr}[\rho_{lmn} (\Pi_{a_1|x_1}^l \otimes \Pi_{a_2|x_2}^m \otimes \Pi_{a_3|x_3}^n)] \equiv \sum_{l,m,n} \lambda_{lmn} P_{lmn}(a_1,a_2,a_3|x_1,x_2,x_3)
\end{equation}
where the coefficients $\lambda_{lmn}$ are determined as $\lambda_{lmn} = \operatorname{Tr}[\rho (\Pi_l \otimes \Pi_m \otimes \Pi_n)]$, with $\lambda_{lmn} \geq 0$, $\forall l,m,n$ and $\sum_{l,m,n} \lambda_{lmn} = 1$ with $\rho_{lmn} = \frac{\Pi^l \otimes \Pi^m \otimes \Pi^n \rho \Pi^l \otimes \Pi^m \otimes \Pi^n}{\lambda_{lmn}}$. $\rho_{lmn}$ is a
trace one positive operator acting on subspace of types $\mathbb{C}\otimes\mathbb{C}\otimes\mathbb{C}$,$\mathbb{C}^2\otimes\mathbb{C}\otimes\mathbb{C}$,$\dots$,$\mathbb{C}^2\otimes\mathbb{C}^2\otimes\mathbb{C}^2$.

If the joint probability $P(a_1,a_2,a_3|x_1,x_2,x_3)$ meets the conditions outlined in (\ref{multi}), then it implies that the joint probability $P_{lmn}(a_1,a_2,a_3|x_1,x_2,x_3)$ will also adhere to the constraint equations. Consequently, the maximum attainable value of $p^H_3$ across all three-qubit states denotes the optimal value achievable within tri-partite quantum states.
\end{proof}

\begin{thm}\label{self3}
If the maximum value of $p^H_3$ is observed, then the state of the system is equivalent up to local isometries
to $|\varsigma\rangle_{A'_1A'_2A'_3} \otimes |GHZ\rangle_{A''_1A''_2A''_3}$, where $|GHZ\rangle$ attains the maximum value of $p^H_3$  and $|\varsigma\rangle$ is an arbitrary tripartite state.
\end{thm}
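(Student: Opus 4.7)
The plan is to combine Theorem~\ref{optimal3} with the explicit qubit parametrization already given in the text to force the underlying state and measurements, up to local isometries, into the canonical tripartite Hardy realization, which is local-unitary equivalent to $|GHZ\rangle$.

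First, by Theorem~\ref{optimal3} and Masanes' direct-sum decomposition \cite{Mas06}, any realization producing the maximum $p^H_3=1/8$ factors as a convex combination $P=\sum_{l,m,n}\lambda_{lmn}P_{lmn}$ of sub-distributions on blocks of dimension at most $2\otimes 2\otimes 2$. Because each vanishing Hardy constraint in~(\ref{multi}) is a nonnegatively weighted sum over blocks, every block with $\lambda_{lmn}>0$ must individually satisfy the Hardy zero conditions, so each contributing $P_{lmn}$ has its own Hardy weight in $[0,1/8]$. Since the convex mean of these weights equals $1/8$, every contributing block must itself saturate the bound. Blocks containing a one-dimensional factor correspond to one party using a deterministic strategy, for which a local model exists and $p^H_3 = 0$; this concentrates all weight on genuine $2\otimes 2\otimes 2$ blocks.

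Second, on each such qubit block the parametrization carried out in the text determines the state and measurements up to local unitaries: choosing $A^0_i=\sigma_z$ without loss of generality, parametrizing $A^1_i$ by the angles $\alpha_i$, using the four Hardy zero conditions to reduce the eight state amplitudes to four, and maximizing $p^H_3$ pins down the critical point $\alpha_i=3\pi/2$ with $a_{000}=1/(2\sqrt 2)$ and $a_{011}=a_{101}=a_{110}=-1/(2\sqrt 2)$. Verifying that this is the unique global maximizer (up to the already-quotiented local-unitary freedom, and up to the complex-conjugation symmetry that is removed by the real-realization result of \cite{Barizien2025}) is the crucial check at this stage. The explicit local unitary displayed immediately after the calculation then converts each contributing block into the standard $|GHZ\rangle$ with the canonical observables. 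I would build the self-testing local isometry $\Phi=\Phi_{A_1}\otimes\Phi_{A_2}\otimes\Phi_{A_3}$ in the standard SWAP form: each $\Phi_{A_i}$ appends an ancillary qubit in $|0\rangle_{A''_i}$, performs a block-index-controlled partial SWAP between the ancilla and the two-dimensional contributing block, and then applies the block-dependent canonicalizing unitary. This transfers the qubit contents of every contributing block into the ancilla as the canonical GHZ state, while the block labels and the content of all non-contributing subspaces remain in the primed registers and play the role of the junk state $|\varsigma\rangle_{A'_1A'_2A'_3}$ in~(\ref{def:Selftest}).

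The main obstacle I expect lies in carefully handling the freedom that remains after the qubit reduction. Concretely, I must (a) establish that the critical point above is the unique global maximum within the qubit parametrization, since only then do all contributing blocks produce the \emph{same} canonical GHZ state rather than locally-unitarily inequivalent ones, and (b) absorb any residual complex-conjugation or global-phase ambiguity of the per-block realization into a block-controlled choice of $\Phi_{A_i}$, so that the GHZ component on the ancillas is block-independent and the remaining degrees of freedom all collapse into the junk register $|\varsigma\rangle$.
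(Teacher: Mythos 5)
Your overall strategy is essentially the paper's own: use the Masanes block decomposition from Theorem~\ref{optimal3}, argue that every block carrying weight must individually satisfy the zero conditions and saturate $p^H_3=1/8$, conclude that $\rho_{lmn}$ is the GHZ state on each contributing $2\otimes2\otimes2$ block so that $|\chi\rangle=\bigoplus_{lmn}\sqrt{\lambda_{lmn}}|GHZ\rangle_{lmn}$, and then extract the reference state with an explicit local isometry (the paper uses a simple basis relabelling $|2t,0\rangle\mapsto|2t,0\rangle$, $|2t+1,0\rangle\mapsto|2t,1\rangle$ rather than your block-controlled SWAP, but these do the same job). Your convexity argument for per-block saturation and your explicit attention to the residual per-block unitary freedom are actually spelled out more carefully than in the paper, which simply asserts the ``if and only if'' for $\rho_{lmn}=|GHZ\rangle_{lmn}\langle GHZ|$; the uniqueness-of-maximizer check you flag as outstanding is likewise not carried out in detail by the paper.

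One step of your argument is wrong as stated, though its conclusion is recoverable. You claim that blocks containing a one-dimensional factor ``correspond to one party using a deterministic strategy, for which a local model exists and $p^H_3=0$.'' This is false for blocks of type $\mathbb{C}\otimes\mathbb{C}^2\otimes\mathbb{C}^2$: fixing, say, $A_1$'s outcomes deterministically (which the constraints force to be $+1$ for $A^0_1$ and $-1$ for $A^1_1$) reduces the conditions~(\ref{multi}) exactly to the \emph{bipartite} Hardy conditions on the remaining two parties, whose distribution need not be local and which attains $p=\frac{-11+5\sqrt{5}}{2}\approx 0.09>0$ \cite{RZS12}. Such blocks are excluded not because they are local but because $0.09<\frac{1}{8}$, so they cannot participate in a convex combination averaging to $\frac{1}{8}$. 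You should replace the locality claim with this strict-suboptimality argument (only the blocks with at least two one-dimensional factors are genuinely local with $p^H_3=0$).
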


\begin{proof}
 We can choose eigenstates of the observables $A^0_1$, $A^0_2$, and $A^0_3$ to be computational eigenbasis:

\begin{equation*}
\begin{split}
\Pi_{+|A^0_1}^l &= \vert 2l \rangle \langle 2l \vert, \quad \Pi_{-|A^0_1}^l = \vert 2l+1 \rangle \langle 2l+1 \vert, \\
\Pi_{+|A^0_2}^m &= \vert 2m \rangle \langle 2m \vert, \quad \Pi_{-|A^0_2}^m = \vert 2m+1 \rangle \langle 2m+1 \vert,\\
\Pi_{+|A^0_3}^n &= \vert 2n \rangle \langle 2n \vert, \quad \Pi_{-|A^0_3}^n = \vert 2n+1 \rangle \langle 2n+1 \vert
\end{split}
\end{equation*}
where $l$, $m$, and $n$ belong to the set $\{0, 1, 2, \dots\}$. Now, $p^H_{3,lmn}$, where $p^H_{3,lmn} = P_{lmn}(+1, +1, +1|A^0_1, A^0_2,A^0_3)$, in the subspace $H_{A_1}^l \otimes H_{A_2}^m \otimes H_{A_3}^n$ can attain the maximum value if and only if $\rho_{lmn} = |GHZ\rangle_{lmn}\langle GHZ|$. Therefore, the unknown state $|\chi\rangle$ can give the maximum value of $p^H_3$, if and only if,
\begin{equation*}
|\chi\rangle = \bigoplus_{l,m,n}\sqrt{\lambda_{lmn}}|GHZ\rangle_{lmn}.
\end{equation*}

Hence, if we choose the local isometries in the following way,
\begin{align*}
&\Phi_{A_i} = \Phi, \\
&\Phi |2t,0\rangle_{XX'} \rightarrow |2t,0\rangle_{XX'}, \\
&\Phi |2t+1,0\rangle_{XX'} \rightarrow |2t,1\rangle_{XX'}, 
\end{align*}
where $X'X'' \in \{A'_iA''_i\}$, then we have,
\begin{equation*}
 (\Phi_{A_1}\otimes\Phi_{A_2}\otimes\Phi_{A_3})|\chi\rangle_{A_1A_2A_3}= |\varsigma\rangle_{A'_1A'_2A'_3} \otimes |GHZ\rangle_{A''_1A''_2A''_3},   
\end{equation*}
where $|\varsigma\rangle_{A'_1A'_2A'_3}$ some junk state.
\end{proof}

\section{Geometric characterization of the Hardy-maximal correlation}\label{sec3}
We now examine the extremal quantum correlation $\vec{\mathcal{P}}_H$ associated with the tripartite Hardy paradox~\eqref{HardyN}. Given that an extremal point of the quantum correlation set may be either \emph{exposed} or \emph{non-exposed}, our aim is to determine the nature of extremality of this particular correlation. Recall that an extremal point $\mathcal{P}$ of a convex set $\mathcal{C}$ is called \emph{exposed} if there exists a supporting hyperplane whose intersection with $\mathcal{C}$ is exactly $\{\mathcal{P}\}$. If no such hyperplane singles out the point uniquely, then the extremal point is termed \emph{non-exposed}.  
Exposed points correspond to unique optimizers of linear functionals over the set, whereas non-exposed points arise as limits of exposed points but are not themselves uniquely determined by any linear functional.

In the tripartite scenario with two dichotomic measurements per party and binary outcomes, the set of no-signaling correlations forms a polytope of dimension $26$. A convenient way to represent any point of this polytope is through the full list of single-body, two-body, and three-body correlators. Accordingly, any correlation $\mathcal{P}\in\mathbb{R}^{26}$ can be expressed using the coordinate array
\begin{widetext}
\begin{equation}
\resizebox{0.7\columnwidth}{!}{$
\begin{array}{c||c|c|c|c|c|c|c|c|c}
& \langle B_0 \rangle & \langle B_1 \rangle & \langle C_0 \rangle & \langle C_1 \rangle &
\langle B_0 C_0 \rangle & \langle B_0 C_1 \rangle & \langle B_1 C_0 \rangle & \langle B_1 C_1 \rangle \\
\hline\hline
\langle A_0 \rangle &
\langle A_0 B_0 \rangle & \langle A_0 B_1 \rangle & \langle A_0 C_0 \rangle & \langle A_0 C_1 \rangle &
\langle A_0 B_0 C_0 \rangle & \langle A_0 B_0 C_1 \rangle & \langle A_0 B_1 C_0 \rangle & \langle A_0 B_1 C_1 \rangle \\
\langle A_1 \rangle &
\langle A_1 B_0 \rangle & \langle A_1 B_1 \rangle & \langle A_1 C_0 \rangle & \langle A_1 C_1 \rangle &
\langle A_1 B_0 C_0 \rangle & \langle A_1 B_0 C_1 \rangle & \langle A_1 B_1 C_0 \rangle & \langle A_1 B_1 C_1 \rangle \\
\end{array}
$}
\label{eq:tripartite-corr}
\end{equation}
\end{widetext}

Then $\vec{\mathcal{P}}_H$ has the following form
\begin{widetext}
\begin{equation}
\resizebox{0.9\columnwidth}{!}{$
\begin{array}{c||c|c|c|c|c|c|c|c|c}
& \langle B_0 \rangle=0 & \langle B_1 \rangle=0 & \langle C_0 \rangle=0 & \langle C_1 \rangle=0 &
\langle B_0 C_0 \rangle=0 & \langle B_0 C_1 \rangle=0 & \langle B_1 C_0 \rangle=0 & \langle B_1 C_1 \rangle=0 \\
\hline\hline
\langle A_0 \rangle=0 &
\langle A_0 B_0 \rangle=0 & \langle A_0 B_1 \rangle=0 & \langle A_0 C_0 \rangle=0 & \langle A_0 C_1 \rangle=0 &
\langle A_0 B_0 C_0 \rangle=0 & \langle A_0 B_0 C_1 \rangle = -1 & \langle A_0 B_1 C_0 \rangle = -1 & \langle A_0 B_1 C_1 \rangle=0 \\
\langle A_1 \rangle=0 &
\langle A_1 B_0 \rangle=0 & \langle A_1 B_1 \rangle=0 & \langle A_1 C_0 \rangle=0 & \langle A_1 C_1 \rangle=0 &
\langle A_1 B_0 C_0 \rangle = -1 & \langle A_1 B_0 C_1 \rangle=0 & \langle A_1 B_1 C_0 \rangle=0 & \langle A_1 B_1 C_1 \rangle = 1 \\
\end{array}
$}
\label{eq:hardy-corr}
\end{equation}
\end{widetext}

We will now show that this point is exposed in the quantum set by finding the Bell function $\vec{\mathcal{B}}$ that is uniquely maximized by $\mathcal{P}_H$. Define operators
\begin{equation*}
\begin{array}{lcl}
G_{1} = A_{0} \otimes \mathbb{1} \otimes \mathbb{1} & G_{7} = A_{0} \otimes B_{0} \otimes \mathbb{1} & G_{19} = A_{0} \otimes B_{0} \otimes C_{0} \\
G_{2} = A_{1} \otimes \mathbb{1} \otimes \mathbb{1} & G_{8} = A_{0} \otimes B_{1} \otimes \mathbb{1} & G_{20} = A_{0} \otimes B_{1} \otimes C_{0} \\
G_{3} = \mathbb{1} \otimes B_{0} \otimes \mathbb{1} & G_{9} = A_{1} \otimes B_{0} \otimes \mathbb{1} & G_{21} = A_{1} \otimes B_{0} \otimes C_{0} \\
G_{4} = \mathbb{1} \otimes B_{1} \otimes \mathbb{1} & G_{10} = A_{1} \otimes B_{1} \otimes \mathbb{1} & G_{22} = A_{1} \otimes B_{1} \otimes C_{0} \\
G_{5} = \mathbb{1} \otimes \mathbb{1} \otimes C_{0} & G_{11} = \mathbb{1} \otimes B_{0} \otimes C_{0} & G_{23} = A_{0} \otimes B_{0} \otimes C_{1} \\
G_{6} = \mathbb{1} \otimes \mathbb{1} \otimes C_{1} & G_{12} = \mathbb{1} \otimes B_{0} \otimes C_{1} & G_{24} = A_{0} \otimes B_{1} \otimes C_{1} \\
\relax & G_{13} = \mathbb{1} \otimes B_{1} \otimes C_{0} & G_{25} = A_{1} \otimes B_{0} \otimes C_{1} \\
\relax & G_{14} = \mathbb{1} \otimes B_{1} \otimes C_{1} & G_{26} = A_{1} \otimes B_{1} \otimes C_{1} \\
\relax & G_{15} = A_{0} \otimes \mathbb{1} \otimes C_{0} & \relax \\
\relax & G_{16} = A_{0} \otimes \mathbb{1} \otimes C_{1} & \relax \\
\relax & G_{17} = A_{1} \otimes \mathbb{1} \otimes C_{0} & \relax \\
\relax & G_{18} = A_{1} \otimes \mathbb{1} \otimes C_{1} & \relax \\
\end{array}
\end{equation*}

Let $\vec{\mathcal{B}}$ be an arbitrary Bell function $\{b_1,b_2,\ldots,b_{26}\}$ and the corresponding Bell operator equals
\begin{equation*}
W = \sum_{j = 1}^{26} b_{j} G_{j}.
\end{equation*}

The eigenvalue equation $W\ket{\psi_H} = \lambda \ket{\psi_H}$ implies that
\begin{eqnarray}
\bra{001} W \ket{\psi_H} &=& 0,\\
\bra{010} W \ket{\psi_H} &=& 0,\\
\bra{100} W \ket{\psi_H} &=& 0,\\
\bra{101} W \ket{\psi_H} &=& 0,\\
\bra{011} W \ket{\psi_H} &=& 0,\\
\bra{110} W \ket{\psi_H} &=& 0
\end{eqnarray}
This condition can be expressed as a linear constraint of the form $\vec{\mathcal{B}} \cdot \vec{\mathcal{T}}_{i} = 0$, where the components of $\vec{\mathcal{T}}_{i}$ are proportional to the quantities $\bra{i} G_j \ket{\psi_H}$. 

Our objective is to determine the largest value of the Bell expression $\vec{\mathcal{B}}\cdot \mathcal{P}_{H}$ among all Bell functionals that are maximised by $\vec{\mathcal{P}}_H$. To formulate this optimisation as a linear program, we impose a convenient normalisation, for example that every local deterministic behaviour evaluates to at most \(1\) under $\vec{\mathcal{B}}$. Since any Bell functional can be rescaled, this restriction involves no loss of generality. The corresponding linear program is given by
\begin{gather*}
\begin{array}{ll}
\max & \vec{\mathcal{B}}\cdot \mathcal{P}_{H} \\[1mm]
\textnormal{over} & \vec{\mathcal{B}}\in \mathbb{R}^{26} \\[1mm]
\textnormal{subject to} &
\vec{\mathcal{B}}\cdot \vec{\mathcal{T}}_{i} = 0,
\quad i\in\{001,010,100,101,011,110\} ,\\
&\vec{\mathcal{B}}\cdot \mathcal{P}_{j} \leq 1,
\quad j = 1,2,\ldots,64,
\end{array}
\end{gather*}
where $\mathcal{P}_{j}$ ranges over the $64$ deterministic local behaviours. The maximum value of the linear program is found to be identically 2. The Bell function returned by the program is $\vec{\mathcal{B}}:=$ $\{b_1=0,\dots,b_{19}=0,b_{20}=-\frac{1}{2},b_{21}=-\frac{1}{2},b_{22}=0,b_{23}=-\frac{1}{2},b_{24}=0,b_{25}=0,b_{26}=\frac{1}{2}\}$. So $W_{\text{max}}=\frac{1}{2}(-A_{0} \otimes B_{1} \otimes C_{0}-A_{1} \otimes B_{0} \otimes C_{0}-A_{0} \otimes B_{0} \otimes C_{1}+A_{1} \otimes B_{1} \otimes C_{1})$.

If a Bell functional $\vec{\mathcal{B}}$ achieves its maximum value on $\vec{\mathcal{P}}_H$, then the corresponding Hardy state $\ket{\psi_{H}}$ must be an eigenstate of the associated Bell operator $W$. This means that there exists a scalar $\lambda$ such that $W \ket{\psi_{H}} = \lambda \ket{\psi_{H}} $. This condition ensures that the supporting hyperplane defined by $\vec{\mathcal{B}}$ is tangent to the boundary of the quantum set precisely at $\vec{\mathcal{P}}_H$. In our case, we get $W_{\text{max}} \ket{GHZ} = 2\cdot \ket{GHZ} $. So $W_{\text{max}}$ is tangent to the boundary of the quantum set precisely at $\vec{\mathcal{P}}_H$, proving the exposed nature of $\vec{\mathcal{P}}_H$.

The optimality of $\vec{\mathcal{B}}$ can be shown analytically as follows. First, note that $\vec{\mathcal{B}}$ satisfies the constraints defining the linear program. To further show that $\vec{\mathcal{B}} \cdot \mathcal{P}_H=2$ is the optimal max-value, we write the dual program
\begin{gather*}
\begin{array}{ll}
\min & \sum_{k = 1}^{64} y_{k}\\
\textnormal{over} & y_{k} \geq 0, \; y, z_i \in \mathbb{R}\\
\textnormal{subject to} & \sum_{k = 1}^{64} y_{k} \mathcal{P}_{k} + \sum_{i} z_i\vec{\mathcal{T}}_{i}  = \mathcal{P}_H.
\end{array}
\end{gather*}
The assignment
\begin{align*}
y_{k} &=
\begin{cases}
\frac{1}{8} &\text{if } k \in \left\{
\begin{aligned}
&1,5,15,19,23,24,32,\\
&38,44,47,51,52,54,57
\end{aligned}
\right\},\\
0.0721688 &\text{if } k \in \left\{
\begin{aligned}
7,45
\end{aligned}
\right\},\\
0.0528312 &\text{if } k \in \left\{
\begin{aligned}
10,36
\end{aligned}
\right\},\\
0 &\text{otherwise},
\end{cases}
\end{align*}

\begin{align*}
z_{i} &=
\begin{cases}
-0.408248 &\text{if } k \in \left\{
\begin{aligned}
1,6
\end{aligned}
\right\},\\
0.408248 &\text{if } k \in \left\{
\begin{aligned}
2,5
\end{aligned}
\right\},\\
0 &\text{otherwise},
\end{cases}
\end{align*}

is a valid solution to the dual and the resulting value is $2$. This concludes the proof of the optimality of the Bell functional $\vec{\mathcal{B}}$, and establishes that it is equivalent up to relabeling of the Mermin inequality \cite{mermin1990extreme}.

\section{Robust self-testing of Hardy state and measurement}\label{sec4}

The tripartite Hardy paradox~\eqref{HardyN} features the zero-probability constraints~\eqref{multi}. Of course, demanding strictly vanishing probabilities is unrealistic in any physical experiment. It is therefore essential to assess how robust a given self-testing statement is against small imperfections. Let us recall the SWAP method \cite{yang2014robust,bancal2015physical,Chen2023quantumcorrelations} and explain how to use it to obtain numerical robustness bounds for the Hardy state and the associated measurements.

We consider local operators $\mathsf{S}_{A_iA'_i}$ acting jointly on a black-box system $A_i$ and a trusted auxiliary system $A'_i$, and define
\begin{equation}
    \mathcal{S}\!\left(\rho_{A_1A_2A_3}\otimes\proj{000}_{A'_1A'_2A'_3}\right)\mathcal{S}^\dagger,
\end{equation}
where $\mathcal{S}=\mathsf{S}_{A_1A'_1}\otimes\mathsf{S}_{A_2A'_2}\otimes\mathsf{S}_{A_3A'_3}$. In the ideal case where the actual observables $A_i^j$ coincide with the reference ones $\widetilde{A}_i^j$, we choose $\mathsf{S}_{A_iA'_i}$ to swap the Hilbert spaces $\mathcal{H}_{A_i}$ and $\mathcal{H}_{A'_i}$. The fidelity
\begin{equation}\label{eq:statefidelity}
    F=\bra{\widetilde{\psi}}\rho_{\text{\tiny SWAP}}\ket{\widetilde{\psi}},
\end{equation}
between the reference state $\ket{\widetilde{\psi}}$ and the \emph{swapped} state
\begin{equation}\label{eq:rho_swap}
    \rho_{\text{\tiny SWAP}}=
    \tr_{A_1A_2A_3}\!\left[
    \mathcal{S}\left(\rho_{A_1A_2A_3}\otimes\proj{000}_{A'_1A'_2A'_3}\right)\mathcal{S}^\dagger
    \right],
\end{equation}
then quantifies how close the unknown shared state $\rho_{A_1A_2A_3}$ is to $\ket{\widetilde{\psi}}$. Note that, the entries of $\rho_{\text{\tiny SWAP}}$ are given by linear combinations of correlation terms from the set $d=\{\tr(\mathbb{1}\rho_{A_1A_2A_3}),\tr(A^0_1\rho_{A_1A_2A_3}),\ldots, \tr(A^0_1A^1_2A^0_3\rho_{A_1A_2A_3}),\ldots\}$. The fidelity $F$ is hence a linear combination of these moments.

The tripartite Hardy paradox~\eqref{HardyN} has maximal violation $p^{\max}_3=\frac{1}{8}$ and the corresponding quantum state is $\ket{GHZ}$. The observables are taken as
\begin{align*}
A^0_1=A^0_2=A^0_3=\begin{pmatrix}
    0 & e^{-\iota \frac{\pi}{6}} \\ 
    e^{\iota \frac{\pi}{6}} & 0
    \end{pmatrix},\\
A^1_1=A^1_2=A^1_3=\begin{pmatrix}
    0 & e^{-\iota \frac{2\pi}{3}} \\ 
    e^{\iota \frac{2\pi}{3}} & 0
    \end{pmatrix},
\end{align*}

We will now construct the local isometry that will work as swap between the Hilbert space $A_i$ and auxiliary Hilbert space $A'_i$. We will follow the same procedure as depicted in \cite{yang2014robust,bancal2015physical,Chen2023quantumcorrelations}. For each $A_i$ we define
\begin{align}\label{eq:localiso}
    \mathsf{S}_{A_iA'_i}:=U_{A_iA'_i}V_{A_iA'_i}U_{A_iA'_i},
\end{align}
where
\begin{align}
   U_{A_iA'_i}&:=\mathbb{1}_{A_i}\otimes\proj{0}_{A'_i}+X_{A_i}\otimes\proj{1}_{A'_i},\nonumber\\
   V_{A_iA'_i}&:=\frac{\mathbb{1}_{A_i}+Z_{A_i}}{2}\otimes\mathbb{1}_{A'_i}
   +\frac{\mathbb{1}_{A_i}-Z_{A_i}}{2}\otimes\sigma_{x,{A'_i}}.\nonumber
\end{align}

Consider, $\rho_{A_1A_2A_3}$ to be a general tripartite qudit state. Replacing \eqref{eq:localiso} in \eqref{eq:rho_swap}, we get

\begin{equation}\label{eq:rho_swap3}
     \rho_{\text{\tiny SWAP}}=
    \tr_{A_1A_2A_3}\!\left[
    \mathcal{S}\left(\rho_{A_1A_2A_3}\otimes\proj{000}_{A'_1A_2A'_3}\right)\mathcal{S}^\dagger
    \right]\\
    =\sum_{\mathsf{ijklst}}\mathsf{C}_{\mathsf{ijklst}}\;
\ket{\mathsf{i}}\bra{\mathsf{l}}\otimes
\ket{\mathsf{j}}\bra{\mathsf{s}}\otimes
\ket{\mathsf{k}}\bra{\mathsf{t}},
\end{equation}

where
\[
\mathsf{C}_{\mathsf{ijklst}}=\frac{1}{64}\,
\tr\!\Big[\big(\mathsf{M}^{A_1}_{\mathsf{il}}\otimes \mathsf{M}^{A_2}_{\mathsf{js}}\otimes \mathsf{M}^{A_3}_{\mathsf{kt}}\big)\rho_{A_1A_2A_3}\Big],
\]
and
\begin{align*}
\mathsf{M}^{A_1}_{\mathsf{il}}&=(\mathbb{1}+Z_{A_1})^{1-\mathsf{l}}(X_{A_1}-Z_{A_1}X_{A_1})^{\mathsf{l}}
(\mathbb{1}+Z_{A_1})^{1-\mathsf{i}}(X_{A_1}-X_{A_1}Z_{A_1})^{\mathsf{i}},\\
\mathsf{M}^{A_2}_{\mathsf{js}}&=(\mathbb{1}+Z_{A_2})^{1-\mathsf{s}}(X_{A_2}-Z_{A_2}X_{A_2})^{\mathsf{s}}
(\mathbb{1}+Z_{A_2})^{1-\mathsf{j}}(X_{A_2}-X_{A_2}Z_{A_2})^{\mathsf{j}},\\
\mathsf{M}^{A_3}_{\mathsf{kt}}&=(\mathbb{1}+Z_{A_3})^{1-\mathsf{t}}(X_{A_3}-Z_{A_3}X_{A_3})^{\mathsf{t}}
(\mathbb{1}+Z_{A_3})^{1-\mathsf{k}}(X_{A_3}-X_{A_3}Z_{A_3})^{\mathsf{k}}.
\end{align*}
Substituting \eqref{eq:rho_swap3} into \eqref{eq:statefidelity} yields
\begin{widetext}
\begin{equation}\label{figmeritstate}
\begin{split}
F =\;&
\tr\Big[\Big(\frac{1}{8}\mathbb{1} - \frac{1}{16} Z_{A_1} Z_{A_2} X_{A_3} - \frac{1}{16} Z_{A_1} X_{A_2} Z_{A_3} - \frac{1}{16} X_{A_1} Z_{A_2} Z_{A_3} + \frac{1}{32} X_{A_1} X_{A_2} X_{A_3} - \frac{1}{32} Z_{A_1} X_{A_1} Z_{A_2} X_{A_2} + \frac{1}{32} Z_{A_1} X_{A_1} X_{A_2} Z_{A_2}\\ &- \frac{1}{32} Z_{A_1} X_{A_1} Z_{A_3} X_{A_3} + \frac{1}{32}
Z_{A_1} X_{A_1} X_{A_3} Z_{A_3} + \frac{1}{32} X_{A_1} Z_{A_1} Z_{A_2} X_{A_2} - \frac{1}{32} X_{A_1} Z_{A_1} X_{A_2} Z_{A_2} + \frac{1}{32} X_{A_1} Z_{A_1} Z_{A_3} X_{A_3}\\& -
\frac{1}{32} X_{A_1} Z_{A_1} X_{A_3} Z_{A_3} - \frac{1}{32} Z_{A_2} X_{A_2} Z_{A_3} X_{A_3} + \frac{1}{32} Z_{A_2} X_{A_2} X_{A_3} Z_{A_3} + \frac{1}{32} X_{A_2} Z_{A_2}
Z_{A_3} X_{A_3} - \frac{1}{32} X_{A_2} Z_{A_2} X_{A_3} Z_{A_3}\\& + \frac{1}{16} Z_{A_1} X_{A_1} Z_{A_1} Z_{A_2} Z_{A_3} - \frac{1}{64} Z_{A_1} X_{A_1} Z_{A_1} X_{A_2} X_{A_3} +
\frac{1}{16} Z_{A_1} Z_{A_2} X_{A_2} Z_{A_2} Z_{A_3} + \frac{1}{16} Z_{A_1} Z_{A_2} Z_{A_3} X_{A_3} Z_{A_3}\\& - \frac{1}{64} X_{A_1} Z_{A_2} X_{A_2} Z_{A_2} X_{A_3} -
\frac{1}{64} X_{A_1} X_{A_2} Z_{A_3} X_{A_3} Z_{A_3} + \frac{1}{64} Z_{A_1} X_{A_1} Z_{A_1} Z_{A_2} X_{A_2} Z_{A_2} X_{A_3} + \frac{1}{64} Z_{A_1} X_{A_1} Z_{A_1} X_{A_2}
Z_{A_3} X_{A_3} Z_{A_3}\\& + \frac{1}{64} X_{A_1} Z_{A_2} X_{A_2} Z_{A_2} Z_{A_3} X_{A_3} Z_{A_3} - \frac{1}{64} Z_{A_1} X_{A_1} Z_{A_1} Z_{A_2} X_{A_2} Z_{A_2} Z_{A_3} X_{A_3} Z_{A_3}\Big)\rho_{A_1A_2A_3}\Big].
\end{split}
\end{equation}
\end{widetext}

Now to assess the robustness of the self-testing statements relative to the reference state, we compute the worst-case fidelity by optimizing over all quantum realizations consistent with the observed value of the Hardy paradox.  In particular, we employ a relaxation method known as the NPA hierarchy \cite{Navascues2007,Navascues2008}, which provides an infinite sequence of outer approximations of the quantum set $
\mathcal{Q}_1 \supset \mathcal{Q}_2 \supset \dots \supset \mathcal{Q}_\ell \supset \dots$. Each level in this hierarchy is defined via SDP. It has been proven that these sets converge to the quantum set in the limit $ \ell \to \infty $, i.e., $\lim_{\ell\to\infty}\mathcal{Q}_\ell = \mathcal{Q}$ \cite{Navascues2007,Navascues2008}. 

Now we solve the following optimization problem
\begin{equation}\label{eq:robust_state}
\begin{split}
     &\mathcal{F}=\min_{d\in \mathcal{Q}_\ell} F\\
     \text{s.t.}\;&\tr\left(\frac{\mathbb{1}+Z_{A_1}}{2} \otimes \frac{\mathbb{1}+Z_{A_2}}{2}\otimes \frac{\mathbb{1}+Z_{A_3}}{2} \rho_{A_1A_2A_3}\right)=p^{\max}_3-\varepsilon_1,\\
      &\tr\left(\frac{\mathbb{1}+X_{A_1}}{2} \otimes \frac{\mathbb{1}+Z_{A_2}}{2}\otimes \frac{\mathbb{1}+Z_{A_3}}{2} \rho_{A_1A_2A_3}\right)\le\varepsilon_2,\\
      &\tr\left(\frac{\mathbb{1}+Z_{A_1}}{2}\otimes \frac{\mathbb{1}+X_{A_2}}{2}\otimes \frac{\mathbb{1}+Z_{A_3}}{2} \rho_{A_1A_2A_3}\right)\le\varepsilon_2,\\
      &\tr\left(\frac{\mathbb{1}+Z_{A_1}}{2} \otimes \frac{\mathbb{1}+Z_{A_2}}{2}\otimes \frac{\mathbb{1}+X_{A_3}}{2} \rho_{A_1A_2A_3}\right)\le\varepsilon_2,\\
      &\tr\left(\frac{\mathbb{1}-X_{A_1}}{2} \otimes \frac{\mathbb{1}-X_{A_2}}{2}\otimes \frac{\mathbb{1}-X_{A_3}}{2} \rho_{A_1A_2A_3}\right)\le\varepsilon_2,
\end{split}
\end{equation}
where $\varepsilon_1,\varepsilon_2\ge 0$ capture admissible deviations. The above optimization problem has been implemented in PYTHON using CVXPY, by setting the hierarchy level as $\ell=5$ \cite{gitmishra}. The resulting fidelity estimates are presented in Fig.~\ref{fig:fidelity_hardy}.

Now for the robust self-testing of measurements, we use the same figure of merit as considered in \cite{yang2014robust,bancal2015physical,Chen2023quantumcorrelations}

\begin{equation}\label{figmeritmeasure}
\mathsf{T}_{A_i}
=\frac{1}{2}[
P(0|A^0_i,\ket{0})+P(1|A^0_i,\ket{1})\\
+P(0|A^1_i,\ket{+})+P(1|A^1_i,\ket{-})
]-1,
\end{equation}
where
\begin{align}\label{Eq:P_A}
P(a|A^j_i,\ket{\phi})=
\tr\left\{
\frac{\mathbb{1}+(-1)^a A^j_i}{2}\otimes\mathbb{1}_{A'_i}\;
\big[\Phi_{A_iA'_i}\big(\tr_{\{A_1,A_2,A_3\}\setminus\{A_i\}}\rho_{A_1A_2A_3}\otimes\proj{\phi}_{A'_i}\big)\Phi_{A_iA'_i}^\dagger\big]
\right\}.
\end{align}
When the devices implement the reference measurements, each $\mathsf{T}_{A_i}$ equals $1$. As in our case the system $A_1$, $A_2$ and $A_3$ have same measurement, we only consider the analysis for system $A_1$. In our explicit case,
\begin{equation}
\begin{split}
P(0|A^0_1,\ket{0})
&=\tr\!\left[
\Big(\tfrac{\mathbb{1}+Z_{A_1}}{2}
+\tfrac{\mathbb{1}-Z_{A_1}}{2}\,X_{A_1}\,\tfrac{\mathbb{1}+Z_{A_1}}{2}\,X_{A_1}\,\tfrac{\mathbb{1}-Z_{A_1}}{2}\Big)\rho_{A_1A_2A_3}
\right],\\
P(1|A^0_1,\ket{1})
&=\tr\!\left[
\Big(X_{A_1}\,\tfrac{\mathbb{1}-Z_{A_1}}{2}\,X_{A_1}
+X_{A_1}\,\tfrac{\mathbb{1}+Z_{A_1}}{2}\,X_{A_1}\,\tfrac{\mathbb{1}-Z_{A_1}}{2}\,X_{A_1}\,\tfrac{\mathbb{1}+Z_{A_1}}{2}\,X_{A_1}\Big)\rho_{A_1A_2A_3}
\right],\\
P(0|A^1_1,\ket{+})
&=\tr\!\left[
\Big(\tfrac{\mathbb{1}+X_{A_1}}{2}
+\tfrac{\mathbb{1}-X_{A_1}}{2}\,Z_{A_1}\,\tfrac{\mathbb{1}+X_{A_1}}{2}\,Z_{A_1}\,\tfrac{\mathbb{1}-X_{A_1}}{2}\Big)\rho_{A_1A_2A_3}
\right],\\
P(1|A^1_1,\ket{-})
&=\tr\!\left[
\Big(Z_{A_1}\,\tfrac{\mathbb{1}+X_{A_1}}{2}\,Z_{A_1}
+Z_{A_1}\,\tfrac{\mathbb{1}-X_{A_1}}{2}\,Z_{A_1}\,\tfrac{\mathbb{1}+X_{A_1}}{2}\,Z_{A_1}\,\tfrac{\mathbb{1}-X_{A_1}}{2}\,Z_{A_1}\Big)\rho_{A_1A_2A_3}
\right].
\end{split}
\end{equation}
We then estimate the worst-case figures of merit by solving
\begin{equation}\label{eq:robust_meas}
\begin{split}
      &\tau_{A_1}=\min_{d\in \mathcal{Q}_6}\,\mathsf{T}_{A_1}\\
      \text{s.t.}\;&\tr\left(\frac{\mathbb{1}+Z_{A_1}}{2} \otimes \frac{\mathbb{1}+Z_{A_2}}{2}\otimes \frac{\mathbb{1}+Z_{A_3}}{2} \rho_{A_1A_2A_3}\right)=p^{\max}_3-\varepsilon_1,\\
      &\tr\left(\frac{\mathbb{1}+X_{A_1}}{2} \otimes \frac{\mathbb{1}+Z_{A_2}}{2}\otimes \frac{\mathbb{1}+Z_{A_3}}{2} \rho_{A_1A_2A_3}\right)\le\varepsilon_2,\\
      &\tr\left(\frac{\mathbb{1}+Z_{A_1}}{2}\otimes \frac{\mathbb{1}+X_{A_2}}{2}\otimes \frac{\mathbb{1}+Z_{A_3}}{2} \rho_{A_1A_2A_3}\right)\le\varepsilon_2,\\
      &\tr\left(\frac{\mathbb{1}+Z_{A_1}}{2} \otimes \frac{\mathbb{1}+Z_{A_2}}{2}\otimes \frac{\mathbb{1}+X_{A_3}}{2} \rho_{A_1A_2A_3}\right)\le\varepsilon_2,\\      &\tr\left(\frac{\mathbb{1}-X_{A_1}}{2} \otimes \frac{\mathbb{1}-X_{A_2}}{2}\otimes \frac{\mathbb{1}-X_{A_3}}{2} \rho_{A_1A_2A_3}\right)\le\varepsilon_2,
\end{split}
\end{equation}

Note that trivial measurements, $\frac{\mathbb{1}+(-1)^aA^j_1}{2}=\frac{\mathbb{1}_2}{2}$ for all $a,j$, already achieve $\mathsf{T}_{A_1}=0$. Hence, a nontrivial self-test requires $\tau_{A_1}>0$. The corresponding robustness analysis is shown in Fig.~\ref{fig:fidelity_measure}.
\begin{figure*}[htbp]
    \centering
    \begin{subfigure}[t]{0.48\linewidth}
        \centering
        \includegraphics[width=\linewidth]{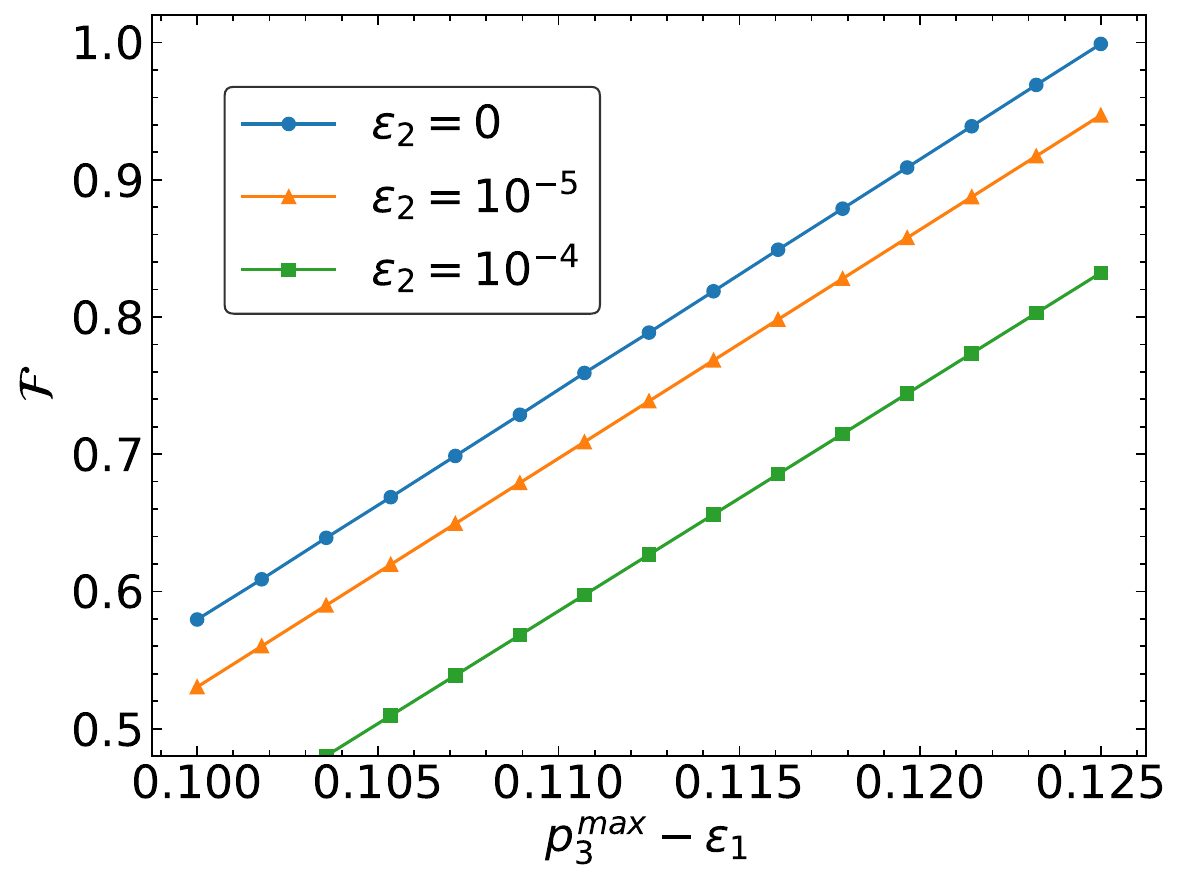}
        \caption{$\mathcal{F}$ corresponding to different values of $p^{max}_3 - \varepsilon_1$ and $\varepsilon_2$.}
        \label{fig:fidelity_hardy}
    \end{subfigure}\hfill
    \begin{subfigure}[t]{0.48\linewidth}
        \centering
        \includegraphics[width=\linewidth]{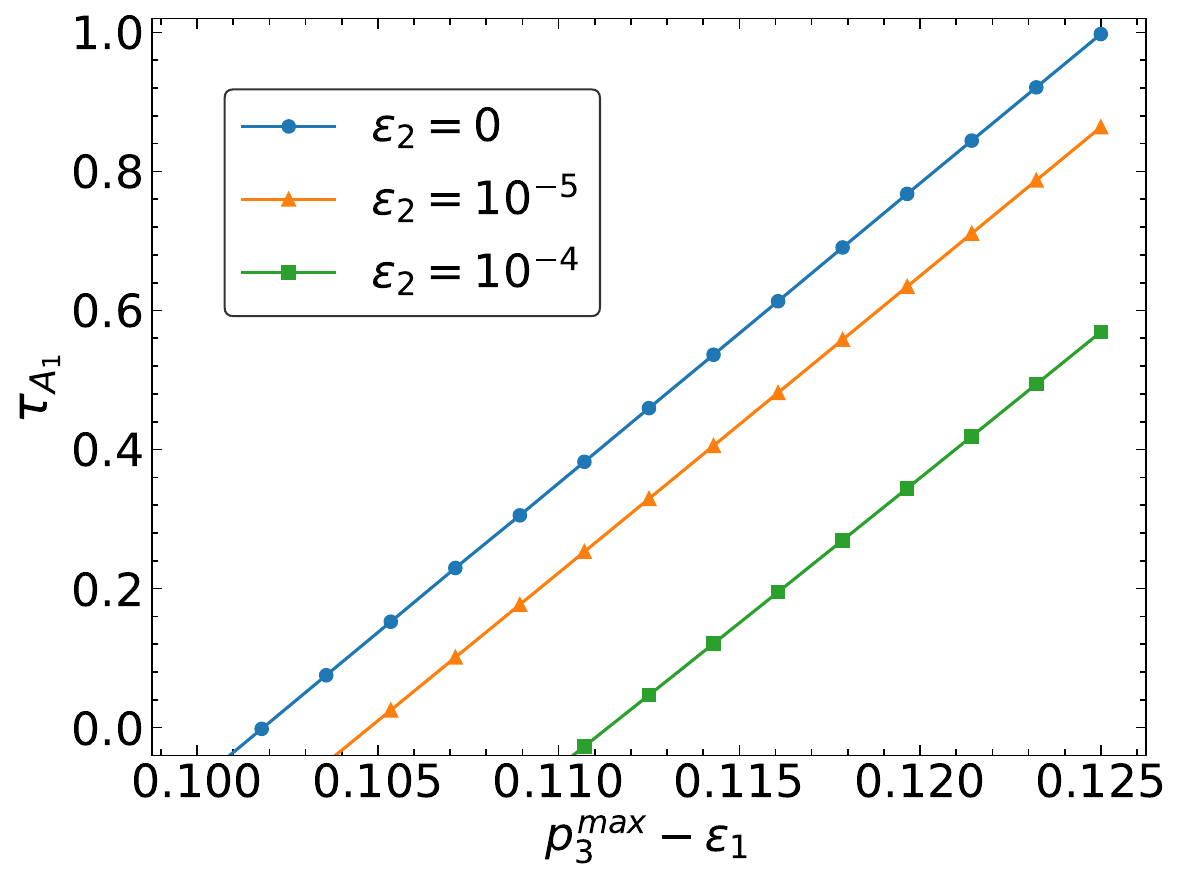}
        \caption{$\tau_{A_1}$ corresponding to different values of $p^{max}_3 - \varepsilon_1$ and $\varepsilon_2$.}
        \label{fig:fidelity_measure}
    \end{subfigure}
    \caption{  Plots illustrate the robustness of the self-testing result corresponding to the correlation that maximally violates the tripartite Hardy paradox. To quantify the quality of self-testing of the state, we adopt the fidelity with respect to the chosen reference state \eqref{figmeritstate} as the figure of merit. For the self-testing of the measurements, the relevant figure of merit is defined in \eqref{figmeritmeasure}. Because the reference measurements of $A_1$, $A_2$, and $A_3$ are similar, it is sufficient to display the plot for $A_1$ only. Throughout our analysis, the parameter $\varepsilon_2$ denotes the admissible deviation from the ideal zero-probability condition. All numerical results reported here are obtained using the level-6 outer approximation of the quantum set $\mathcal{Q}$.}
    \label{fig:robustness}
\end{figure*}

From Fig.~\ref{fig:robustness} we observe that, for $N=3$, the condition \eqref{multi} can tolerate at most an error of $\varepsilon_2=10^{-4}$. Within this regime, the fidelity satisfies $\mathcal{F}>\tfrac{1}{2}$ provided $\varepsilon_1 \lesssim 0.005$. For this range of $\varepsilon_1$, we also find that $\tau_{A_1}>0$. This robustness analysis guarantees that even when \eqref{multi} is only approximately satisfied, the state $\ket{\psi^H_N}$ and associated measurements can still be self-tested reliably.

\section{Conclusion}\label{sec5}

In this work, we studied the $N=3$ instance of a multipartite Hardy-type paradox in the $((3,2,2))$ Bell scenario. We showed that the quantum behavior attaining the maximal Hardy success probability self-tests the tripartite Greenberger--Horne--Zeilinger (GHZ) state and the associated measurements. The optimal Hardy success probability is found to be $p_3^H=1/8$, achieved by a realization locally equivalent to the GHZ state with suitable observables. This provides a Hardy-type, rather than Bell-inequality-based, route to device-independent certification of the GHZ realization.

A central message of our work is geometric. In the $((2,2,2))$ scenario, known Hardy-type self-testing correlations lying on the no-signaling boundary provide prominent examples of non-exposed quantum points \cite{goh2018geometry,Chen2023quantumcorrelations,Barizien2025}. The tripartite Hardy construction studied here behaves differently. Although the Hardy-maximal tripartite behavior also lies on the boundary of the no-signaling set, we prove that it is an extremal point of the quantum correlation set and, moreover, that it is exposed. Thus, Hardy-type logical nonlocality does not necessarily lead to non-exposed quantum boundary points. Already in the $((3,2,2))$ scenario, the geometry of the Hardy-maximal correlation differs qualitatively from the familiar bipartite picture.

We also showed that the Hardy-maximal tripartite behavior coincides with the behavior attaining the maximal quantum violation of the Mermin inequality. This establishes a direct connection between two a priori distinct ways of witnessing multipartite nonlocality. The Hardy construction selects a correlation through a logical pattern of vanishing and nonvanishing probabilities, whereas the Mermin inequality selects a correlation by optimizing a linear Bell functional. Our result shows that, in the tripartite GHZ scenario, these two routes converge to the same exposed quantum boundary point. In this sense, the same correlation admits simultaneously a logical-paradox interpretation, a Bell-inequality characterization, and a self-testing interpretation.

To address experimental imperfections, we further developed a robust self-testing analysis for the tripartite Hardy construction. We showed that if the Hardy zero-probability constraints are satisfied up to small deviations and the Hardy success probability remains close to its optimal quantum value, then the underlying state and measurements remain quantitatively close to the ideal GHZ realization, up to local isometries. In particular, our analysis demonstrates robustness for noise levels up to $10^{-4}$ in the zero-condition probabilities. This indicates that Hardy-type self-testing can be made meaningful beyond the idealized, exactly constrained setting.

Several open questions remain. The most immediate one is whether the same geometric phenomenon persists for the generalized $N$-party Hardy paradox with $N>3$. In particular, it would be interesting to determine whether the correlations attaining the maximal Hardy success probability in higher-party $((N,2,2))$ scenarios are also exposed extremal points of the quantum set, or whether new forms of non-exposed extremality appear. A related direction is to investigate whether the Hardy--Mermin coincidence observed here has an analogue for higher-party Mermin-Ardehali-Belinskii-Klyshko inequalities. More broadly, it would be valuable to understand whether other logical paradoxes, beyond the Hardy family, can lead to multipartite self-testing protocols with similarly sharp geometric characterizations. Such results would further clarify the relationship between paradox-based, inequality-based, and convex-geometric formulations of quantum nonlocality.

\section*{Acknowledgement.} We are grateful for stimulating discussions with Yeong-Cherng Liang,
Guruprasad Kar, Subhendu Bikash Ghosh, and Snehasish Roy Chowdhury. R.A. acknowledges financial support from the National Science and Technology Council, Taiwan (Grants No. 112-2628-M-006-007-MY4, 114-2811-M-006-069-MY2).

\bibliography{reference}
	

\appendix
\section{Detailed calculation of finding maximum $p^H_3$ over tripartite qubit system}\label{appenA}

Consider three parties $A_1$, $A_2$, and $A_3$ sharing a general pure three-qubit state of the form
\begin{equation}\label{eq:state}
\ket{\psi}_{A_1A_2A_3}
= \sum_{i,j,k \in \{0,1\}} a_{ijk} \ket{ijk},
\end{equation}
where the coefficients $a_{ijk} \in \mathbb{R}$ satisfy the normalization condition $\sum_{i,j,k} a_{ijk}^2 = 1$.

Without loss of generality, we fix one measurement for each party as $A_i^0 = \sigma_z$. The second measurement $A_i^1$ is taken to be
\begin{equation}\label{eq:measurement}
\begin{aligned}
\ket{A_i^{1,+}} &= \cos\!\left(\frac{\alpha_i}{2}\right)\ket{0} + \sin\!\left(\frac{\alpha_i}{2}\right)\ket{1},\\
\ket{A_i^{1,-}} &= \sin\!\left(\frac{\alpha_i}{2}\right)\ket{0} - \cos\!\left(\frac{\alpha_i}{2}\right)\ket{1},
\end{aligned}
\end{equation}
with $0 < \alpha_i < \pi$.

Imposing the zero-probability conditions~\eqref{multi}, we obtain the constraints
\begin{subequations}\label{eq:constraints}
\begin{align}
a_{100} &= -\cot\!\left(\frac{\alpha_1}{2}\right) a_{000}, \\
a_{010} &= -\cot\!\left(\frac{\alpha_2}{2}\right) a_{000}, \\
a_{001} &= -\cot\!\left(\frac{\alpha_3}{2}\right) a_{000}, \\
a_{111} &= a_{000} \tan\!\left(\frac{\alpha_1}{2}\right)\tan\!\left(\frac{\alpha_2}{2}\right)\tan\!\left(\frac{\alpha_3}{2}\right) \nonumber\\
&\quad - a_{100}\tan\!\left(\frac{\alpha_2}{2}\right)\tan\!\left(\frac{\alpha_3}{2}\right)
- a_{010}\tan\!\left(\frac{\alpha_1}{2}\right)\tan\!\left(\frac{\alpha_3}{2}\right) \nonumber\\
&\quad - a_{001}\tan\!\left(\frac{\alpha_1}{2}\right)\tan\!\left(\frac{\alpha_2}{2}\right)
+ a_{110}\tan\!\left(\frac{\alpha_3}{2}\right)
+ a_{101}\tan\!\left(\frac{\alpha_2}{2}\right)
+ a_{011}\tan\!\left(\frac{\alpha_1}{2}\right).
\end{align}
\end{subequations}

Substituting the first three relations into the last equation, we obtain
\begin{equation}\label{eq:linear_constraint}
a_{111} - a_{110} T_3 - a_{101} T_2 - a_{011} T_1 = K,
\end{equation}
where we define $T_i := \tan(\alpha_i/2)$ and
\begin{equation}
K = a_{000} \left( T_1 T_2 T_3 + \frac{T_2 T_3}{T_1} + \frac{T_1 T_3}{T_2} + \frac{T_1 T_2}{T_3} \right).
\end{equation}

Using normalization, maximizing $a_{000}^2$ is equivalent to minimizing $\sum_{(i,j,k)\neq(0,0,0)} a_{ijk}^2$. For fixed $a_{000}$, the coefficients $a_{100}$, $a_{010}$, and $a_{001}$ are determined, leaving four free variables. We thus consider the Lagrangian
\begin{equation}
\mathcal{L}
= a_{110}^2 + a_{101}^2 + a_{011}^2 + a_{111}^2
- \lambda \big( a_{111} - a_{110}T_3 - a_{101}T_2 - a_{011}T_1 - K \big).
\end{equation}

Stationarity conditions yield
\begin{equation}\label{eq:opt_coeff}
a_{110} = -\frac{\lambda}{2} T_3, \quad
a_{101} = -\frac{\lambda}{2} T_2, \quad
a_{011} = -\frac{\lambda}{2} T_1, \quad
a_{111} = \frac{\lambda}{2}.
\end{equation}

Substituting into Eq.~\eqref{eq:linear_constraint}, we obtain
\begin{equation}
\frac{\lambda}{2} = \frac{K}{1 + T_1^2 + T_2^2 + T_3^2},
\end{equation}
which gives
\begin{equation}\label{eq:opt_values}
\begin{aligned}
a_{110} &= -\frac{K T_3}{1 + T_1^2 + T_2^2 + T_3^2}, \\
a_{101} &= -\frac{K T_2}{1 + T_1^2 + T_2^2 + T_3^2}, \\
a_{011} &= -\frac{K T_1}{1 + T_1^2 + T_2^2 + T_3^2}, \\
a_{111} &= \frac{K}{1 + T_1^2 + T_2^2 + T_3^2}.
\end{aligned}
\end{equation}

Substituting back into the normalization condition yields
\begin{equation}\label{eq:a000_general}
a_{000}^2 =
\frac{1}{1 + \frac{1}{T_1^2} + \frac{1}{T_2^2} + \frac{1}{T_3^2}
+ \frac{\left(T_1 T_2 T_3 + \frac{T_2 T_3}{T_1} + \frac{T_1 T_3}{T_2} + \frac{T_1 T_2}{T_3} \right)^2}{1 + T_1^2 + T_2^2 + T_3^2}}.
\end{equation}

Thus maximizing $a_{000}^2$ is equivalent to minimizing the denominator of Eq.~\eqref{eq:a000_general} over all $T_1,T_2,T_3>0$. To perform this optimization, define $x_i=T_i^2, i=1,2,3$.

Let $D(x_1,x_2,x_3)$ denote the denominator in Eq.~\eqref{eq:a000_general}. Then
  \begin{equation}\label{eq:D_first}
  \begin{split}
    D(x_1,x_2,x_3)
  &=
  1+\frac{1}{x_1}+\frac{1}{x_2}+\frac{1}{x_3}
+\frac{
x_1x_2x_3
\left(
1+\frac{1}{x_1}+\frac{1}{x_2}+\frac{1}{x_3}
\right)^2}{1+x_1+x_2+x_3}\\&=\left( \frac{x_1 x_2 x_3 + x_1 x_2 + x_2 x_3 + x_1 x_3}{x_1 x_2 x_3} \right) \frac{(1+x_1)(1+x_2)(1+x_3)}{1 + x_1 + x_2 + x_3}.  
  \end{split}
\end{equation}

We now prove directly that $D(x_1,x_2,x_3)\geq 8$ for all $x_i>0$. Introduce the positive variables
\begin{equation}
u=\frac{1}{x_1},\qquad v=\frac{1}{x_2},\qquad w=\frac{1}{x_3}.
\end{equation}
Then Eq.~\eqref{eq:D_first} is equivalently
\begin{equation}\label{eq:D_uvw}
D = 1+u+v+w + \frac{(1+u+v+w)^2}{uv+uw+vw+uvw}.
\end{equation}

Let $t=u+v+w>0$. By the AM-GM inequality, we have
\begin{equation}
uv+uw+vw\leq \frac{t^2}{3}, \qquad uvw\leq \frac{t^3}{27}.
\end{equation}
Therefore,
\begin{equation}
uv+uw+vw+uvw \leq \frac{t^2}{3}+\frac{t^3}{27} = \frac{t^2(t+9)}{27}.
\end{equation}

Using this in Eq.~\eqref{eq:D_uvw}, we obtain the lower bound
\begin{equation}
D \geq 1+t+ \frac{27(1+t)^2}{t^2(t+9)}.
\end{equation}

A direct simplification gives
\begin{equation}
1+t+ \frac{27(1+t)^2}{t^2(t+9)} - 8 = \frac{(t-3)^2(t^2+8t+3)}{t^2(t+9)} \geq 0.
\end{equation}
Hence, $D(x_1,x_2,x_3)\geq 8$ for all $x_1,x_2,x_3>0$. Equality holds only when both AM-GM inequalities are saturated and $t=3$. This requires $u=v=w=1$, and therefore $x_1=x_2=x_3=1$.
Equivalently,
\begin{equation}
T_1=T_2=T_3=1, \qquad \alpha_1=\alpha_2=\alpha_3=\frac{\pi}{2}.
\end{equation}

Substituting $x_1=x_2=x_3=1$ into $D$ yields the minimum value 8. Thus, the absolute maximum is rigorously proven to be
\begin{equation}
a_{000}^2=\frac{1}{8}.
\end{equation}


Therefore, the optimal observables are $A_i^0 = \sigma_z$ and $A_i^1 = \sigma_x$, and the corresponding state is
\begin{equation}\label{eq:target_state}
\ket{\psi^*}_{A_1A_2A_3}
= \frac{1}{2\sqrt{2}} \big(
\ket{000} - \ket{001} - \ket{010} - \ket{100}
- \ket{011} - \ket{101} - \ket{110} + \ket{111}
\big).
\end{equation}

The state and measurements also equivalent to, GHZ state with the following observables,
\begin{equation}
\begin{aligned}
    A^0_i &= \begin{pmatrix}
    0 & e^{-\iota \frac{\pi}{6}} \\ 
    e^{\iota \frac{\pi}{6}} & 0
    \end{pmatrix} 
    & \text{and} \quad 
    A^1_i &= \begin{pmatrix}
    0 & e^{-\iota \frac{2\pi}{3}} \\ 
    e^{\iota \frac{2\pi}{3}} & 0
    \end{pmatrix}
\end{aligned}
\end{equation}
using the following local unitary
$\left(
\begin{array}{cc}
 -\frac{e^{\frac{11 i \pi }{12}}}{\sqrt{2}} & -\frac{e^{\frac{5 i
   \pi }{12}}}{\sqrt{2}} \\
 -\frac{e^{-\frac{11 i \pi }{12}}}{\sqrt{2}} & -\frac{e^{-\frac{5
   i \pi }{12}}}{\sqrt{2}} \\
\end{array}
\right)$.

\end{document}